\theoremstyle{plain}
\newtheorem{theorem}{Theorem}[section]
\newtheorem{example}[theorem]{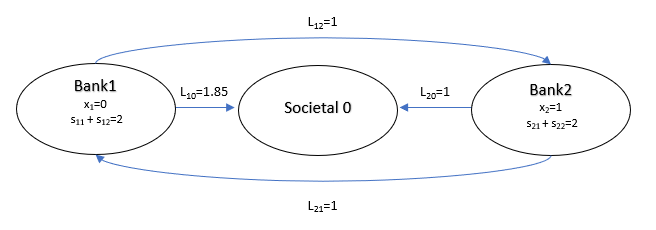}
\newtheorem{assumption}[theorem]{Assumption}
\newtheorem{remark}[theorem]{Remark}
\newtheorem{algorithm}[theorem]{Algorithm}
\newtheorem{proposition}[theorem]{Proposition}
\theoremstyle{remark}
\title{Price-mediated contagion with endogenous market liquidity}
\author{Zhiyu Cao\thanks{Stevens Institute of Technology, School of Business, Hoboken, NJ 07030, USA} 
\and 
Zachary Feinstein\thanks{Stevens Institute of Technology, School of Business, Hoboken, NJ 07030, USA, {\tt zfeinste@stevens.edu}}}
\date{\today}
\newcommand{\M}{\mathcal{M}}
\begin{document}

\maketitle

\begin{abstract}
Price-mediated contagion occurs when a positive feedback loop develops following a drop in asset prices which forces banks and other financial institutions to sell their holdings. Prior studies of such events fix the level of market liquidity without regards to the level of stress applied to the system. This paper introduces a framework to understand price-mediated contagion in a system where the capacity of the market to absorb liquidated assets is determined endogenously. In doing so, we construct a joint clearing system in interbank payments, asset prices, and market liquidity. We establish mild assumptions which guarantee the existence of greatest and least clearing solutions. We conclude with detailed numerical case studies which demonstrate the, potentially severe, repercussions of endogenizing the market liquidity on system risk.
\end{abstract}

\section{Introduction}

The global financial system is an intricate network of interlinked financial institutions. These interconnections occur through, e.g., debt obligations, cross-ownership, and common portfolio holdings. Due to possible network effects, adverse actions by a single bank can distress other entities, potentially initiating a cascade of severe economic repercussions across the financial system. This contagion effect, also called ``systemic risk'', can lead to excessive harm to the global economy. As shown in the 2007--2009 financial crisis, Lehman Brothers' size and interconnectedness with other financial institutions made it a source of systemic risk. Though prior works have analyzed the potential for default and price-mediated contagion due to distress of firms like Lehman Brothers, we include further deterioration in financial health because of a drop in the market liquidity provided by these distressed institutions. Specifically, this paper extends the financial contagion model from \cite{feinstein2017financial} to study the impact of bank distress on market liquidity. In light of their critical role in maintaining liquidity in the financial system, we will frequently characterize healthy banks (which provide the market liquidity) as market makers.

The organization of this paper is as follows. We complete this introduction by providing a review of the relevant literature in Section~\ref{sec:intro-lit} and a summarization of the primary contributions of this work in Section~\ref{sec:intro-contrib}. In Section \ref{sec:setting}, we extend the framework of \cite{feinstein2017financial} to explicitly incorporate market liquidity in price formation. Section \ref{sec:clearing} provides the primary theoretical results of this work. Specifically, with Section \ref{sec:clearing-model} and \ref{sec:clearing-nonunique}, we prove the existence of greatest and least clearing solutions and demonstrate these are (generally) not equivalent. 
We conclude with numerical case studies in Section \ref{sec:cs} to explore some simple financial implications of this liquidity-augmented contagion model.

\subsection{Literature Review}\label{sec:intro-lit}
Within this work, we study systemic risk via two primary contagion channels: default contagion through interbank obligations and price-mediated contagion through fire sales.

Default contagion arises through the network of counterparty relationships in the financial system. When one financial firm defaults on its obligations, its counterparties take losses on their balance sheets. The resulting write-downs may cause the distress and default of these, otherwise healthy, institutions causing a propagation of losses that cascades through this network of obligations. Such an equilibrium payment system was modeled in the seminal work of \cite{eisenberg2001systemic}; within that work the existence and uniqueness of these clearing payments was determined under simple and natural technical conditions. This model has been extended in many directions. For instance, \cite{acemoglu2015systemic} showed the `robust-yet-fragile' tendency of financial networks topology, \cite{rogers2013failure} introduced bankruptcy costs to this framework and \cite{ghamami2022collateralized} proposed an extension to allow for collateralized obligations. More recently, \cite{veraart2020distress,feinstein2022endogenous} allow for counterparty risk to propagate before defaults have been realized by the system through distress contagion. The aforementioned default contagion models are presented in a single time step setting, time-dynamic systems have been introduced in, e.g., \cite{capponi2015systemic,banerjee2018dynamic,feinstein2020capital,sonin2020continuous}.

Price-mediated contagion is largely driven by the use of mark-to-market accounting rules. When the price of assets on a firm's balance sheet drops, its liquidity can evaporate triggering it to sell assets in order to recapitalize. However, selling assets in a depressed market can cause further deterioration in asset prices; such price drops can trigger more firms to liquidate assets further spreading this distress throughout the financial system. This form of contagion has been studied under multiple different regulatory regimes. \cite{cifuentes2005liquidity,amini2016uniqueness,feinstein2017financial,bichuch2019optimization} incorporate fire sales within the Eisenberg-Noe default contagion system \cite{eisenberg2001systemic}.
Separately, price-mediated contagion was studied within leverage and capital ratio-constrained systems within, e.g., \cite{greenwood2015vulnerable,braouezec2019strategic,cont2019monitoring}.

To model the price impacts from liquidating assets, the price-mediated contagion literature primarily utilizes inverse demand functions. These functions map the number of sold assets into the market clearing price. Linear inverse demand functions were presented within, e.g., \cite{greenwood2015vulnerable,braouezec2019strategic}, and exponential inverse demand functions were utilized in, e.g., \cite{cifuentes2005liquidity}.  Generalizations of these forms were presented in, e.g., \cite{amini2016uniqueness,feinstein2017financial}, and a data-driven model for the inverse demand function was introduced in \cite{cao2023modeling}.
Distinct from these approaches, \cite{bichuch2022endogenous} proposes an explicit market clearing construction for the inverse demand function assuming all market participants are utility maximizers within an exchange economy.

\subsection{Primary Contribution}\label{sec:intro-contrib}

As highlighted above, within this work we introduce an extension of \cite{feinstein2017financial} that incorporates both default and price-mediated contagion to, additionally, directly model the market liquidity available during a stress scenario. Specifically, herein, banks with the ability to provide liquidity are treated as market makers. The distress of market makers can amplify price-mediated contagion by increasing the price impacts from selling assets. Leveraging the inverse demand function framework of \cite{bichuch2022endogenous}, we consider an equilibrium inverse demand function that explicitly accounts for market liquidity in price formation.

Though the framework herein is a combination of the contagion model of \cite{feinstein2017financial} with the inverse demand structure of \cite{bichuch2022endogenous}, we find that the joint effects lead to additional mathematical complexities and financial impacts. We view our primary innovations thusly:
\begin{enumerate}
\item Though the constructions taken herein have been considered independently in the literature, the joint effects have not previously been studied. In particular, directly studying market liquidity within price-mediated contagion is, as far as the authors are aware, novel to the literature. Existing literature (e.g., \cite{amini2015systemic,feinstein2017financial,weber2017joint,banerjee2020price}) finds the clearing prices under an assumption of constant market liquidity, i.e., so that the equilibrium prices are independent of the stress scenario impacting bank balance sheets. However, as observed in the 2008 financial crisis, the lack of liquidity amplifies the price impacts from market orders. 
In proposing this extended model, we quantify the losses caused by endogenizing the market liquidity compared to the fixed (exogenous) liquidity setting. 
\item As discussed, our framework combines the contagion model of \cite{feinstein2017financial} and the inverse demand functions of \cite{bichuch2022endogenous}. However, though considering either system in isolation results in unique solutions, the joint model can result in the non-uniqueness of the clearing solutions. This non-uniqueness is demonstrated in a small case study (Section~\ref{sec:clearing-nonunique}), though existence of greatest and least clearing is guaranteed under mild technical assumptions (Section~\ref{sec:clearing-model}). This emergent behavior of the joint model demonstrates the necessity to consider joint models for systemic risk as the individual components alone can neglect such challenges.
\item Similar to \cite{feinstein2019obligations}, the non-uniqueness of the clearing solutions can lead to downward jumps in the equilibrium health of the financial system. In fact, as this cannot be captured by sensitivity analysis, these events exacerbate systemic risk as simply considering a discrete number of stress scenarios need not closely match the realized behavior. As such, these results underscore the importance of systematically analyzing the system for these potential points of discontinuity to verify the health of the financial network. These discontinuities are explored in numerical case studies.
\end{enumerate}

\paragraph*{Notation:} Fix $x,y \in \mathbb{R}^n$ for some positive integer $n$. Throughout this work, we will compare these vectors component-wise, i.e., $x \leq y$ if and only if $x_i \leq y_i$ for every $i \in \{1,2,\ldots,n\}$. Furthermore, if $x \leq y$, we define the $n$-dimensional box $[x,y] = [x_1,y_1] \times [x_2,y_2] \times \ldots \times [x_n,y_n]$. Additionally, throughout this work, we define the lattice minimum and maximum, respectively, as:
\begin{align*}
x \wedge y&=\left(\min \left(x_1, y_1\right), \min \left(x_2, y_2\right), \ldots, \min \left(x_n, y_n\right)\right)^{\top}, \\
x \vee y&=\left(\max \left(x_1, y_1\right), \max \left(x_2, y_2\right), \ldots, \max \left(x_n, y_n\right)\right)^{\top}.
\end{align*}
The positive part of a vector is denoted by $x^+ = x \vee 0$ and the negative party by $x^- = (-x)^+.$
Finally, we denote the positive orthant in $n$-dimensional space as $\mathbb{R}^n_+ := \{x \in \mathbb{R}^n \; | \; x \geq 0\}$.

\section{Financial Setting}\label{sec:setting}

Within this work, we consider a banking system with \( n \) financial institutions (e.g., banks, hedge funds, and pension plans) and a financial market with \( m \) illiquid assets. Throughout this paper, we fix notation so that \( p \in \mathbb{R}^{n}_{+} \) represents the payments made by these institutions, \( q \in \mathbb{R}_{+}^{m} \) represents the prices of the illiquid assets, and \( M \in \mathbb{R}^n_+ \) to denotes the market liquidity provided by each financial institution.

\subsection{Balance sheet model and payments}\label{sec:setting-payments}
Drawing from the framework presented in \cite{feinstein2017financial}, each bank $i$ in our model possesses three categories of assets: (i) liquid assets \(x_{i} \geq 0\); (ii) illiquid assets \(s_{ik} \geq 0\) units of each asset $k \in \{1,2,\ldots,m\}$; and (iii) interbank assets \(L_{ji} \geq 0\) of obligations owed by bank $j \in \{1,2,\ldots,n\}$.\footnote{We assume that no bank holds any liabilities to itself, i.e., $L_{ii} = 0$ for every bank $i$.} On the other side of the banking book, each bank $i$ has two types of liabilities: (i) external liabilities \(L_{i0} \geq 0\),\footnote{When explicitly modeled, the external economy will sometimes be called ``society'' or the ``societal node''.} and (ii) interbank liabilities \(\sum_{j = 1}^n L_{ij} \geq 0\). The capital of bank $i$ is the difference between the value of its assets and liabilities.
Often we will denote these balance sheet elements as vectors and matrices, i.e., $x \in \mathbb{R}^n_+$, $S \in \mathbb{R}^{n \times m}_+$ (with $s_i \in \mathbb{R}^m_+$ denoting the holdings for bank $i$ only), and $L \in \mathbb{R}^{n \times (n+1)}_+$ with total liabilities $\bar p = L\vec{1}$.

In constructing this banking book, all assets and liabilities were valued at their nominal values. However, under mark-to-market accounting rules, we assume that the price of the illiquid assets may fluctuate with price vector $q \in \mathbb{R}^m_+$ so that the aggregate value of bank $i$'s illiquid portfolio is \(\sum_{k = 1}^m s_{ik} q_k\). Additionally, we value the interbank assets via the Eisenberg-Noe clearing system \cite{eisenberg2001systemic} so that bank $i$ receives $p_{ji} \in [0, L_{ji}]$ from bank $j$. In particular, this clearing system assumes a pro-rata repayment scheme so that $p_{ji} = a_{ji} p_j$ such that $a_{ji} = L_{ji}/\bar{p}_j$ (if $\bar p_j > 0$ and $a_{ji} = 0$ otherwise) are the relative liabilities from bank $j$ to bank $i$ and $p_j \in [0,\bar p_j]$ is the total payments made by bank $j$.
Thus, the total value of bank $i$'s assets is given by $x_{i}+\sum_{k=1}^{m} s_{i k} q_{k}+\sum_{j=1}^{n} a_{j i} p_{j}$ under payment vector $p \in [0,\bar p]$ and price vector $q \in \mathbb{R}_{+}^{m}$.

Following from limited liabilities and the priority of debt over equity, the amount that bank $i$ pays into the financial system is the minimum of its total liabilities $\bar{p}_{i}$ and the value of its assets, i.e, the realized payment $p$ must satisfy the fixed point problem:
\begin{equation} \label{2.1}
p=\bar{p} \wedge\left(x+Sq + A^\top p\right)
\end{equation}
where $A \in [0,1]^{n \times n}$ encodes the relative liabilities.
In Section~\ref{sec:clearing-model} below, we will determine the existence and uniqueness properties of this payment system. We note that, neglecting the impacts of liquidating assets on prices $q$, the properties of this payment system were fully characterized within \cite{eisenberg2001systemic}.

\subsection{Pricing illiquid assets}\label{sec:setting-prices}
Within the prior discussion, we have treated the prices \(q \in \mathbb{R}^m_+\) of the illiquid assets as fixed. Herein, we want to characterize these prices as a function of the quantity of illiquid assets sold in the market. 

\subsubsection{Asset liquidations}
Following \cite{feinstein2017financial}, we denote the number of physical units of asset $k \in \{1,2,\ldots,m\}$ that bank $i \in \{1,2,\ldots,n\}$ wants to sell by the liquidation function $\gamma_{ik}: [0,\bar p] \times \mathbb{R}^m_+ \to [0,s_{ik}]$. (Often, we will denote the vector of liquidations by bank $i$ by $\gamma_i$ without reference to the asset.) These liquidation functions map the payments ($p \in [0,\bar p]$) and prices ($q \in \mathbb{R}^m_+$) into the number of assets to be sold. These assets are being sold by each bank so as to cover any obligations it is unable to meet through its liquid and interbank assets alone. Implicit to the construction of the liquidation function, we assume that no banks acquire additional assets during the liquidation process ($\gamma_i(p,q) \geq 0$)
 nor do they sell more than they own ($\gamma_i(p,q) \leq s_i$).

Throughout this work we will impose the following minimal liquidation condition (see also, e.g., \cite{feinstein2017financial,banerjee2019impact}). This condition implies that all banks will never sell more assets than necessary to remain solvent while also totally liquidating their position when entering into default.  From the bank's perspective, since liquidating illiquid assets impacts their prices, utility maximization dictates that banks want to liquidate the fewest assets necessary to meet their obligations.
\begin{assumption}\label{Assumption 2.1}
For each bank $i$, the liquidation function $\gamma_i: [0, \bar{p}] \times \mathbb{R}^m_+ \to [0,s_i]$ satisfies the \emph{minimal liquidation condition}: 
\begin{align*}
&q^{\top} \gamma_{i}(p, q) = \left(q^{\top} s_{i}\right) \wedge\left(\bar{p}_{i}-x_{i}-\sum_{j=1}^{n} a_{ji} p_{j}\right)^{+}
\end{align*}
for every $p \in [0,\bar p]$ and $q \in \mathbb{R}^m_+$.
\end{assumption}

Assumption \ref{Assumption 2.1} states that the amount liquidated by a bank is either enough to cover the liquid shortfall in obligations or all of its assets are liquidated. The minimal liquidation condition also implies that given higher payments from other banks and higher asset prices, banks will liquidate (in aggregate) fewer assets to cover their shortfall. This latter property is strengthened and formalized within the next assumption.

\begin{assumption}
For each bank $i$, the liquidation function $\gamma_i: [0, \bar{p}] \times \mathbb{R}^m_+ \rightarrow [0,s_i]$ is monotonically non-increasing (i.e., $ \gamma_i(p^{1}, q^{1}) \leq \gamma_i(p^{2}, q^{2})$ for any payments $p^1 \geq p^2$ and prices $q^1 \geq q^2$).
\end{assumption}

Before continuing our discussion where we model the price impacts, we wish to provide an example of a liquidation function. Specifically, we characterize the behavior of a bank that sells its assets in proportion to its holdings, i.e., that sells units of its whole portfolio rather than rebalancing that portfolio.

\begin{example}\label{example2.2.3}
Following \cite{cont2017fire, greenwood2015vulnerable, duarte2021fire}, consider the rule where banks liquidate their assets in proportion to their initial holding. Specifically, for each bank \(i\), and assets \(k, l=1,2, \ldots, m\):
\[\frac{\gamma_{il}(p, q)}{\gamma_{i k}(p, q)}=\frac{s_{il}}{s_{i k}}\]
for any payment vector $p \in [0,\bar p]$ and price vector $q \in \mathbb{R}^m_+$.
In fact, the minimal liquidation condition of Assumption \ref{Assumption 2.1} fully characterizes these proportional liquidations. Specifically, for any bank $i$ and asset $k$, the liquidations are given by
\[\gamma_{ik}(p,q) = \frac{s_{ik}}{q^\top s_i} \left[(q^\top s_i) \wedge \left(\bar p_i - x_i - \sum_{j = 1}^n a_{ji} p_j\right)^+\right] \]
for any $p \in [0,\bar p]$ and $q \in \mathbb{R}^m_+$ such that $q^\top s_i > 0$.\footnote{If $q^\top s_i = 0$ then the liquidation strategy $\gamma_i$ is irrelevant as the bank is unable to raise any funds from selling any portion of its portfolio.}
\end{example}

\subsubsection{Inverse demand function}
As banks implement their liquidation strategies ($\gamma_i(p,q)$ for bank $i$), the prices do not remain fixed. Rather, the price of the illiquid assets are directly dependent on the total number of each asset being sold due to price impacts.  
However, motivated by \cite{bichuch2022endogenous} and in contrast to, e.g., \cite{braouezec2019strategic, cifuentes2005liquidity, weber2017joint, feinstein2017financial,banerjee2020price}, the scale of these price impacts depends also on the amount of liquidity available to the market. For each bank, it has excess liquidity equal to the difference between its liquid holdings $x_i + \sum_{j = 1}^n a_{ji} p_j$ and its obligations $\bar p_i$ if that is positive and $0$ if negative, i.e., we define the vector of market liquidities by 
\begin{equation}\label{eq.2.2}
M = \left(x + A^\top p - \bar p\right)^+
\end{equation}
for payment vector $p \in [0,\bar p]$.
This excess liquidity $M_i$ is used, implicitly, to rebalance the portfolio of bank $i$ when it is not liquidity constrained (i.e., $M_i > 0$). In particular, these liquid banks are able to purchase assets at a discount during a fire sale of $\sum_{i = 1}^n \gamma_i(p,q)$ assets. Therefore, generally, if there is more liquidity in the market, the asset prices will be more stable, i.e., with smaller price impacts. However, if there is a lack of market liquidity, the illiquid assets will quickly depreciate in value.

\begin{remark}\label{remark2.4}
Throughout this work, we often refer to banks that are able to provide market liquidity as market makers. Specifically, the set \(\mathcal{M} := \{i \in \{1,2,\ldots,n\} \; | \; M_{i} > 0\}\) represents the market makers in the system. 
That is, the set of market makers coincides exactly with those institutions that do not liquidate any assets so that they have a liquidity surplus. 
The market liquidity comes directly from these market makers; therefore, as market makers enter distress, the price impacts realized from liquidations increase.
Notably, as a bank's liquidity $M_i$ explicitly depends on the clearing payments $p$, the set of market makers is variable with respect to the health of the system; this is in contrast to the classical approach to inverse demand functions in which the price impacts are independent of the clearing solutions (see, e.g.,~\cite{cifuentes2005liquidity,greenwood2015vulnerable,amini2016uniqueness}).
In fact, it is quite likely that many institutions that look to have book liquidity shortfalls ($x_i < \bar p_i$) can act as market makers based on the payments they receive from other institutions. Therefore, as the health of the system deteriorates, the overlap between those institutions with book liquidity shortfalls and market makers shrinks. We display a stylized categorization of institutions in Figure~\ref{Venn}; we note that, other than the set of institutions with book liquidity shortfall, the classification of any bank depends on the health of the system and may deteriorate during the clearing procedure.
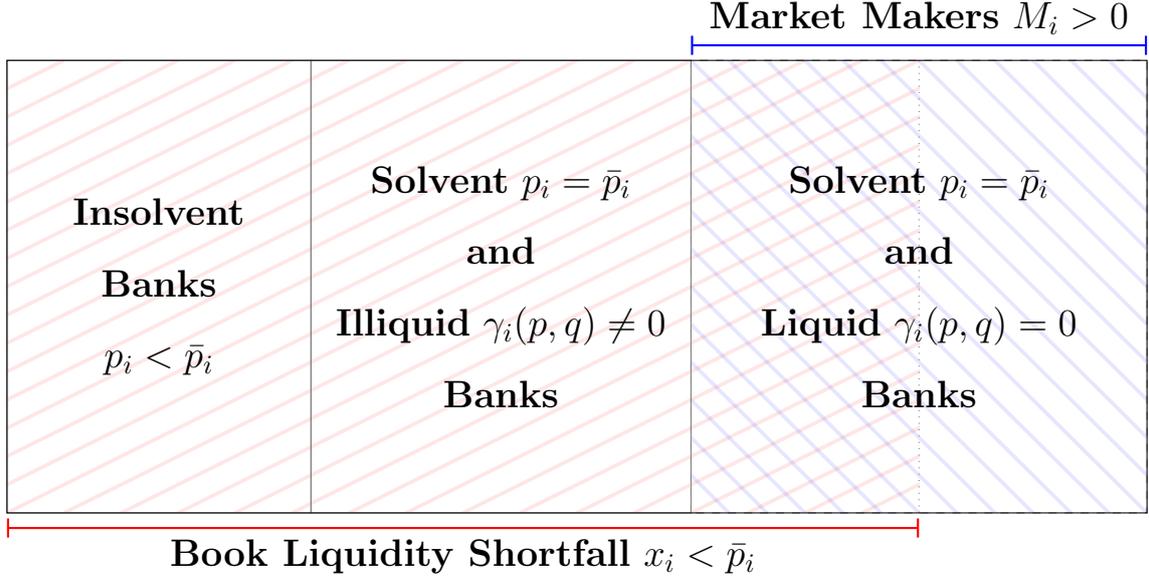
\begin{figure}
\centering
\begin{tikzpicture}
\begin{scope}
    \clip (0,0) rectangle (12,6);
    \foreach \x in {-12,-11.25,...,12}
    {
        \draw[line width=.5mm,draw=red,opacity=0.1] (\x,0) -- (12+\x,6);
    }
\end{scope}
\begin{scope}
    \clip (9,0) rectangle (15,6);
    \foreach \y in {-8,-7.5,...,8}
    {
        \draw[line width=.5mm,draw=blue,opacity=0.1] (9,6+\y) -- (15,\y);
    }
\end{scope}
\draw [draw=gray,dotted] (12,0) -- (12,6);
\draw [draw=gray] (0,0) rectangle (4,6) node[font=\large,pos=0.5,align=center] {\textbf{Insolvent} \\ \textbf{Banks} \\ $p_i < \bar p_i$};
\draw [draw=gray] (4,0) rectangle (9,6) node[font=\large,pos=0.5,align=center] {\textbf{Solvent} $p_i = \bar p_i$ \\ \textbf{and} \\ \textbf{Illiquid} $\gamma_i(p,q) \neq 0$ \\ \textbf{Banks}};
\draw [draw=gray,dashed] (9,0) rectangle (15,6) node[font=\large,pos=0.5,align=center] {\textbf{Solvent} $p_i = \bar p_i$ \\ \textbf{and} \\ \textbf{Liquid} $\gamma_i(p,q) = 0$ \\ \textbf{Banks}};
\draw [|-|,thick,draw=blue] (9,6.2) -- (15,6.2) node[font=\large,anchor=south,pos=0.5,align=center] {\textbf{Market Makers} $M_i > 0$};
\draw [|-|,thick,draw=red] (0,-0.2) -- (12,-0.2) node[font=\large,anchor=north,pos=0.5,align=center] {\textbf{Book Liquidity Shortfall} $x_i < \bar p_i$};
\draw [draw=black] (0,0) rectangle (15,6);
\end{tikzpicture}
\caption{Stylized categorization of banks within our model. Note that the classification of any bank depends on the outcome of the clearing procedure.}
\label{Venn}
\end{figure}
\end{remark}

Consider a scenario in which the payments were $p \in [0,\bar p]$, the prices were quoted as $q \in \mathbb{R}^m_+$, and the vector of market liquidities was $M \in \mathbb{R}^n_+$. As discussed above, given this scenario, the entire financial system would seek to liquidate the portfolio $\sum_{i = 1}^n \gamma_i(p,q)$ of assets. Implementing these liquidations can cause a change in the price $q$ to 
\begin{equation}\label{2.3}
q' = F\left(\sum_{i = 1}^n \gamma_i(p,q),M\right)
\end{equation}
for \emph{inverse demand function} $F: \mathbb{R}^m_+ \times \mathbb{R}^n_+ \to \mathbb{R}^m_+$ mapping liquidations and market liquidity into resulting asset prices.

\begin{assumption}\label{assumption 2.3}
The inverse demand function $F: \mathbb{R}_{+}^{m} \times \mathbb{R}_{+}^{n} \to \mathbb{R}^m_+$ is monotonically non-increasing in the liquidations, i.e., the first argument ($F(\theta^{1}, M)\leq F(\theta^{2}, M)$ for $\theta^{1},\theta^2 \in \mathbb{R}^m_+$ such that $\theta^1 \geq \theta^{2}$) and monotonically non-decreasing in market liquidity, i.e., the second argument ($F(\theta, M^{1})\leq F(\theta, M^{2})$ for $M^1,M^2 \in \mathbb{R}^n_+$ such that $M^{1}  \leq M^{2}$). 
\end{assumption}

Under Assumption \ref{assumption 2.3}, when $M$ is fixed, the relationship \( \theta \in \mathbb{R}^m_+ \mapsto F(\theta, M) \) is monotonically non-increasing. This implies that as the supply of an illiquid asset in the market increases, its price cannot increase; in fact, due to the possibility of cross-impacts as encoded in this vector-valued inverse demand function, selling any assets cannot increase the value of any other assets either. This behavior aligns with traditional assumptions related to inverse demand functions. 

Prior literature has utilized inverse demand functions, without the dependence on market liquidity $M$, with linear (see, e.g., \cite{greenwood2015vulnerable,cont2017fire}) and exponential (see, e.g., \cite{cifuentes2005liquidity}) forms.
The inclusion of the market liquidity in the inverse demand function is motivated by \cite{bichuch2022endogenous}. That work presents an equilibrium formulation for price impacts in an exchange economy to construct the inverse demand function. That is, illiquid assets are sold into a system of market participants. It is this exchange economy of unstressed firms (and, possibly, external liquidity) which is used to purchase the fire sold assets. Such a construction permits us to directly study the price impacts generated from the drying up of market liquidity on prices and, ultimately, systemic risk. 
Assumption \ref{assumption 2.3} implies that, for fixed liquidations, the more available market liquidity, the smaller the price impacts. However, the distress of the banks can leave only a few large institutions providing liquidity, which can trigger significant asset price depreciation.

We conclude this discussion of the inverse demand function with explicit dependence on market liquidity by providing an example of one such construction. This liquidity-adjusted linear inverse demand function is motivated by \cite[Example 4.4]{bichuch2022endogenous} and will be used throughout the numerical studies in the subsequent sections. 

\begin{example}\label{example2.1}
Consider a market with $n$ banks and $m$ assets. Assume each bank $i$ is an exponential utility maximizer with risk aversion $\alpha_i > 0$. In addition, assume that society behaves similarly with external risk aversion $\alpha_0 > 0$. Furthermore, assume the assets follow a joint normal distribution $N(\mu,C)$. 
As taken in \cite{bichuch2022endogenous}, this is a closed financial system, i.e., all liquidated assets need to be purchased by either one of the firms or society.
The resulting equilibrium inverse demand function from an exchange economy, as provided by \cite[Example 4.4]{bichuch2022endogenous}, extends the classical linear inverse demand function of, e.g., \cite{greenwood2015vulnerable} so that the price-impacts explicitly dependent on the set of market makers $\M = \{i \; | \; M_i > 0\}$. Specifically, for liquidations $\theta \in \mathbb{R}^m_+$ and market liquidity $M \in \mathbb{R}^n_+$, the inverse demand function is given by 
\[
F(\theta,M) = \mu - \left(\frac{1}{\alpha_0} + \sum_{i \in \M} \frac{1}{\alpha_i}\right)^{-1} C \theta.
\]
The price impacts of this liquidity-adjusted linear inverse demand function depend strongly on the risk aversions of society $\alpha_0$ and any market maker $\alpha_i$ for $i \in \M$. Therefore, we caution that any mis-calibration of these risk aversions can result in misspecification of the price impacts.
We wish to note that $F$ satisfies Assumption~\ref{assumption 2.3} if, and only if, all assets are non-negatively correlated, i.e., $C_{kl} \geq 0$ for every pair of assets $(k,l)$.
\end{example}

\section{Clearing payments, prices, and market liquidity}\label{sec:clearing}
In this section, we provide the primary theoretical results of this work. Specifically, in Section~\ref{sec:clearing-model}, we formalize the joint clearing problem in payments, prices, and market liquidity from the balance sheet constructions in Section~\ref{sec:setting} above. Therein we also prove the existence of greatest and least equilibria. In Section~\ref{sec:clearing-nonunique}, we provide a simple counterexample to the uniqueness of the clearing solution. This is in contrast to, e.g., \cite{amini2016uniqueness,feinstein2017financial} in which mild assumptions on the monotonicity of the cash raised from asset sales can guarantee uniqueness of the clearing solution. The implications of multiple equilibria are explored in more depth within the numerical case studies of Section~\ref{sec:cs}.

\subsection{Equilibrium formulation}\label{sec:clearing-model}
Recall the financial setting provided within Section~\ref{sec:setting} in which banks have external liquid assets $x \in \mathbb{R}^n_+$, illiquid assets $Sq \in \mathbb{R}^{n}_+$ at prices $q \in \mathbb{R}^m_+$, and interbank assets $A^\top p \in \mathbb{R}^{n}_+$ as well as (interbank and external) obligations $\bar p \in \mathbb{R}^n_+$. 
Given the interdependencies between payments, prices, and liquidity, the clearing system must account for all three objects simultaneously. These constructions were given previously in \eqref{2.1}, \eqref{2.3}, and \eqref{eq.2.2} respectively.  Formally, we define the clearing mechanism $\Phi: [0,\bar p] \times \mathbb{R}^m_+ \times \mathbb{R}^n_+ \to [0,\bar p] \times \mathbb{R}^m_+ \times \mathbb{R}^n_+$ by
\begin{equation} \label{phi_func}
\Phi(p, q, M)=\left(\begin{array}{c} 
\Phi_p(p, q, M)\\
\Phi_q(p, q, M)\\
\Phi_M(p, q, M)\\
\end{array}\right)
=\left(\begin{array}{c}
\bar{p}_{i} \wedge\left(x_{i}+\sum_{k=1}^{m}s_{ik} q+\sum_{j=1}^{n} a_{j i} \cdot p_{j}\right) \\
F\left(\sum_{i=1}^{n}\left[\gamma_{i}(p, q) \right], M\right)\\
(x_{i}+\sum_{j=1}^{n} a_{ji} {p_{j}}-\bar{p}_{i})^{+}
\end{array}\right)
\end{equation}
for any payments $p \in [0,\bar p]$, prices $q \in \mathbb{R}^m_+$, and market liquidity $M \in \mathbb{R}^n_+$.
Here, \(\Phi_p\) represents the updated payments, \(\Phi_q\) denotes the adjusted prices, and \(\Phi_M\) signifies the resulting market liquidity. The equilibrium payments, prices, and market liquidity are given by fixed points of $\Phi$, i.e.,
\begin{equation*} 
\left(p^{*}, q^{*}, M^{*}\right)=\Phi\left(p^{*}, q^{*},M^{*}\right)
\end{equation*}

\begin{theorem} \label{Theorem 3.2.1}
Consider the clearing mechanism $\Phi$ defined in \eqref{phi_func} where the inverse demand function $F$ satisfies Assumption~\ref{assumption 2.3}. Every clearing solution $(p^*,q^*,M^*) = \Phi(p^*,q^*,M^*)$ is an element of the lattice $[0,\bar p] \times [0,F(0,\bar M)] \times [0,\bar M]$ where $\bar M := (x + A^\top \bar p - \bar p)^+$. Furthermore, the set of clearing solutions form a complete lattice and, in particular, there exist a greatest and least greatest and least equilibria $\left(p^{\uparrow},q^{\uparrow}, M^{\uparrow}\right) \geq\left(p^{\downarrow},q^{\downarrow}, M^{\downarrow}\right)$. 
\end{theorem}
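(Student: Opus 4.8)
The plan is to recognize $\Phi$ as an order-preserving self-map of a suitable complete lattice and then invoke Tarski's fixed point theorem; the genuine work is identifying the right lattice (the nominal domain $[0,\bar p]\times\mathbb{R}^m_+\times\mathbb{R}^n_+$ is \emph{not} a complete lattice) and chaining the monotonicity assumptions correctly. No continuity of $\Phi$ is needed.

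First I would establish the a priori bounds, i.e., that every image of $\Phi$ on the appropriate box — in particular every fixed point — lies in $B := [0,\bar p] \times [0, F(0,\bar M)] \times [0,\bar M]$. For any $(p,q,M)$ in the domain of $\Phi$: nonnegativity of $x$, $S$, $A$, $p$, $q$ gives $x + Sq + A^\top p \geq 0$, so $\Phi_p(p,q,M) = \bar p \wedge (x + Sq + A^\top p) \in [0,\bar p]$; from $p \leq \bar p$ and monotonicity of $p \mapsto (x + A^\top p - \bar p)^+$ we get $0 \leq \Phi_M(p,q,M) \leq (x + A^\top \bar p - \bar p)^+ = \bar M$; and if moreover $M \leq \bar M$, then since $\sum_{i} \gamma_i(p,q) \geq 0$, Assumption~\ref{assumption 2.3} (monotone non-increasing in liquidations, non-decreasing in liquidity) yields $0 \leq \Phi_q(p,q,M) = F(\sum_i \gamma_i(p,q), M) \leq F(0,M) \leq F(0,\bar M)$. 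Applying these inequalities to a fixed point $(p^*,q^*,M^*) = \Phi(p^*,q^*,M^*)$ shows $(p^*,q^*,M^*) \in B$, which is the first claim of the theorem; the same computation shows $\Phi(B) \subseteq B$. Since $F(0,\bar M), \bar M \geq 0$, the set $B$ is a genuine product of compact order-intervals in Euclidean space, hence a complete lattice under the componentwise order.

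Next I would verify that $\Phi$ is order-preserving on $B$. The map $\Phi_p$ is non-decreasing in $(p,q)$ (a nonnegative linear map composed with the lattice-minimum against the constant $\bar p$) and does not depend on $M$; $\Phi_M$ is non-decreasing in $p$ and independent of $(q,M)$. For $\Phi_q$: given $(p^1,q^1,M^1) \geq (p^2,q^2,M^2)$ in $B$, the monotonicity assumption on each liquidation function $\gamma_i$ gives $\gamma_i(p^1,q^1) \leq \gamma_i(p^2,q^2)$, hence $\sum_i \gamma_i(p^1,q^1) \leq \sum_i \gamma_i(p^2,q^2)$, and then two applications of Assumption~\ref{assumption 2.3} (decreasing in the liquidation argument, increasing in the liquidity argument) give $\Phi_q(p^1,q^1,M^1) \geq \Phi_q(p^2,q^2,M^2)$. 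Thus $\Phi|_B$ is order-preserving.

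Finally, Tarski's fixed point theorem applies to the order-preserving self-map $\Phi \colon B \to B$ of the complete lattice $B$: its set of fixed points is a nonempty complete lattice, so in particular it contains a greatest element $(p^{\uparrow},q^{\uparrow},M^{\uparrow})$ and a least element $(p^{\downarrow},q^{\downarrow},M^{\downarrow})$ with $(p^{\uparrow},q^{\uparrow},M^{\uparrow}) \geq (p^{\downarrow},q^{\downarrow},M^{\downarrow})$; combined with the first claim that every clearing solution lies in $B$, these are exactly the greatest and least clearing solutions. I expect the main obstacle to be the box-invariance step: one must first recognize that the nominal price and liquidity spaces are not complete lattices and extract the correct a priori ceilings $F(0,\bar M)$ and $\bar M$ from the structure of $\Phi$ and Assumption~\ref{assumption 2.3}; once $B$ is identified, the monotonicity check and the appeal to Tarski are routine.
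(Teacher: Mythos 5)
Your proof is correct and follows essentially the same route as the paper's: establish the a priori bounds $M^*\leq\bar M$ and $q^*\leq F(0,\bar M)$, restrict $\Phi$ to the complete lattice $[0,\bar p]\times[0,F(0,\bar M)]\times[0,\bar M]$, verify that it is order-preserving, and invoke Tarski. The only cosmetic difference is that you check the monotonicity of $\Phi_p$ and $\Phi_M$ in $(p,q)$ directly from the nonnegativity of $S$ and $A$, whereas the paper cites Theorem 3.4 of \cite{feinstein2017financial} for that step and chains the comparison through the intermediate point $(p^2,q^2,M^1)$.
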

\begin{proof}
First, let $(p^*,q^*,M^*) \in [0,\bar p] \times \mathbb{R}^m_+ \times \mathbb{R}^n_+$ be a clearing solution. Then, since the relative liabilities $A \in [0,1]^{n \times n}_+$, it immediately follows that
\[M^* = \Phi_M(p^*,q^*,M^*) = (x + A^\top p^* - \bar p)^+ \leq (x + A^\top \bar p - \bar p)^+ = \bar M,\] 
i.e., $M^* \in [0,\bar M]$.  
Additionally, following from Assumption~\ref{assumption 2.3}, 
\[q^* = \Phi_q(p^*,q^*,M^*) = F(\sum_{i = 1}^n \gamma_i(p^*,q^*),M) \leq F(0,\bar M),\]
i.e., $q^* \in [0,F(0,\bar M)]$.

From this preceding argument, we can equivalently consider the clearing problem for consider $\bar\Phi: [0,\bar p] \times [0, F(0,\bar M)] \times [0,\bar M] \to [0,\bar p] \times [0, F(0,\bar M)] \times [0,\bar M]$ which is the restriction of $\Phi$ to this complete lattice, i.e., $\bar\Phi(p,q,M) = \Phi(p,q,M)$ for any $(p,q,M) \in [0,\bar p] \times [0, F(0,\bar M)] \times [0,\bar M]$.
According to the Tarski fixed point theorem, if $\bar\Phi$ is non-decreasing, then the set of fixed points is a complete lattice and, as a direct consequence, there exists a greatest and least clearing payment vector, vector of price, and vector of market liquidity. Thus, we need to prove that $\left(p^{1}, q^{1},M^{1}\right) \geq\left(p^{2}, q^{2},M^{2}\right)$ implies $\bar\Phi\left(p^{1}, q^{1},M^{1}\right) \geq \bar\Phi\left(p^{2}, q^{2},M^{2}\right)$. Assume $\left(p^{1}, q^{1},M^{1}\right) \geq\left(p^{2}, q^{2},M^{2}\right)$.
First, for fixed $M^{1}$, $\bar\Phi\left(p^{1}, q^{1},M^{1}\right) \geq \bar\Phi\left(p^{2}, q^{2},M^{1}\right)$ has been proved in \cite[Theorem 3.4]{feinstein2017financial}. Therefore, if $\bar\Phi\left(p^{2}, q^{2},M^{1}\right) \geq \bar\Phi\left(p^{2}, q^{2},M^{2}\right)$ for $M^{2}  \leq  M^{1}$, the result follows. Note that, as the payment $p^{2}$ and price $q^{2}$ are fixed, the liquidations $\gamma(p^{2}, q^{2})$ are also fixed. Thus, $\bar\Phi_p(p^{2}, q^{2}, M^{1}) = \bar\Phi_p(p^{2}, q^{2}, M^{2})$ and $\bar\Phi_M(p^{2}, q^{2}, M^{1}) = \bar\Phi_M(p^{2}, q^{2}, M^{2})$. Under Assumption \ref{assumption 2.3}, the inverse demand function is monotonically non-decreasing in $M$, i.e., $\bar\Phi_{q}\left(p^{2}, q^{2}, M^{1}\right) \geq  \bar\Phi_{q}\left(p^{2}, q^{2}, M^{2}\right)$. Therefore, $\bar\Phi\left(p^{2}, q^{2},M^{1}\right) \geq \bar\Phi\left(p^{2}, q^{2},M^{2}\right)$, and the proof is completed.
\end{proof}

\subsection{Counterexample to uniqueness}\label{sec:clearing-nonunique}
Though we prove the existence of greatest and least clearing solutions in Theorem~\ref{Theorem 3.2.1}, prior works on fire sales (e.g., \cite{amini2016uniqueness}) have guaranteed the uniqueness of this clearing solution under mild and natural technical assumptions on the inverse demand function. However, when explicitly introducing market liquidity to the system, we find that uniqueness is no longer guaranteed to hold even under those conditions. In this section, we construct a simple financial system that admits two distinct clearing solutions.

\begin{example}\label{ex:clearing-nonunique}
Consider a network with two banks and a society node with a single type of illiquid asset ($n=2$ and $m=1$). As depicted in Figure \ref{Fig.main1}, bank 1 holds no liquid assets, bank 2 holds $0.01$ in external liquid assets ($x_1 = 0$ and $x_2 = 0.001$). Further, both banks hold illiquid assets to the amount of  $s_1 = 2.35$ and $s_2 = 2$. Additionally, bank 2 holds 1 unit of interbank assets owed by bank 1 ($L_{12} = 1$) with no other interbank obligations in this system and both banks have an obligation of 1 unit externally to the system ($L_{10} = L_{20} = 1$).
Finally, for this construction, we consider the liquidity-adjusted linear inverse demand function of Example~\ref{example2.1} with parameters $(\mu, C) = (1,1)$ and risk aversions $\alpha_0 = \frac{1}{15},~\alpha_1 = \alpha_2 = 1$, i.e.,
\[F(\theta,M) = 1 - \frac{\theta}{15 + |\M|}\]
where $|\M| = \#\{i \; | \; M_i > 0\}$ and $\theta \geq 0$. 
Notably, the interbank network is a regular network as defined in \cite[Definition 5]{eisenberg2001systemic} and under this construction, $\theta \in [0,4.35] \mapsto \theta F(\theta,M)$ is strictly increasing for any possible market liquidity $M \in \mathbb{R}^2_+$; that is, these parameters were chosen so that the uniqueness condition of \cite{amini2016uniqueness} is satisfied.

\begin{figure}[H]
\begin{center}
\includegraphics[width=1\linewidth]{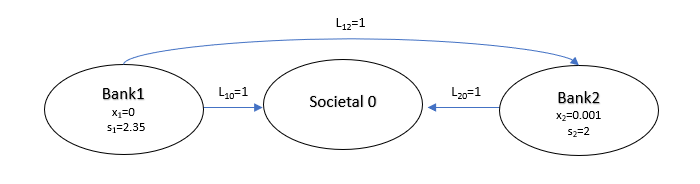}
\caption{The financial system for Example~\ref{ex:clearing-nonunique}}
\label{Fig.main1}
\end{center}
\end{figure}

Due to the construction of the liquidity-adjusted linear inverse demand function in this example, rather than considering ``fictitious'' defaults (as in Algorithm~\ref{algo}), we can fix $|\M| \in \{0,1,2\}$ and consider the clearing system of \cite{amini2016uniqueness}, i.e., under the inverse demand function $f_{|\M|}(\theta) := F(\theta,M)$ where $M$ coincides with the choice of $|\M|$. If the resulting number of market makers corresponds to the assumed $|\M|$ then this is a clearing solution, otherwise it is not. Immediately, upon applying this setup, we find that there are two distinct clearing solutions:
\begin{align*}
(p^\uparrow,q^\uparrow,M^\uparrow) &= ((2,1) \; , \; 0.854 \; , \; (0,0.001))\\
(p^\downarrow,q^\downarrow,M^\downarrow) &= ((1.98,1) \; , \; 0.843 \; , \; (0,0))
\end{align*}
where no clearing solution coincides with $|\M| = 2$.
\end{example}

\section{Case studies}\label{sec:cs}
In this section, we present two numerical case studies to study the financial implications of this extended fire sale model. 
First, in Section~\ref{sec:cs-1}, we compare the proposed joint clearing system presented within this work to the fixed liquidity setting presented in, e.g., \cite{amini2016uniqueness,greenwood2015vulnerable}.
Second, in Section~\ref{sec:cs-diversification}, we consider the impacts of varying bank portfolio constructions on systemic risk. This latter case study investigates the impact of diversification as was also undertaken in, e.g., \cite{capponi2023systemic,elliott2014financial}.
An additional case study using 2011 European Banking Authority data is provided in Appendix~\ref{sec:EBA}; the results therein follow the same stylized structure as presented within this section.

\subsection{Impacts of market maker distress}\label{sec:cs-1}
Consider a financial system with $n = 50$ banks and single ($m = 1$) illiquid asset. 
Each bank $i$ holds external assets split between its liquid portfolio $x_i \in [2,5]$ and illiquid portfolio $s_i = 4$. The liquid assets $x_i$ are varied so as to simulate different stress scenarios.
Furthermore, each bank $i$ owes $L_{i0} = 3$ external to the banking system.
Finally, the interbank assets and liabilities are randomly sampled from a uniform distribution $L_{ij} \sim U([0,1])$ for $i \neq j$ (and $L_{ii} = 0$ for any bank $i$).
The relative liabilities matrix \( A \) and the total liabilities vector \( \bar{p} \) are computed from the interbank liabilities matrix \( L \) as delineated in Section~\ref{sec:setting-payments}.

Under Assumption \ref{Assumption 2.1}, provided they have sufficient assets, banks sell the fewest assets necessary to fulfill their liquid shortfall \(s_i \wedge \left(\bar{p}_{i}-x_{i}-\sum_{j=1}^{n} a_{j i} p_{j}\right)^{+}\). That is, as in \cite{amini2016uniqueness}, the (vector) liquidation function is:
\begin{equation}
\gamma(p, q)=\frac{1}{q}\left[s \wedge \left(\bar{p}-x-A^\top p\right)^{+}\right].
\end{equation}

Finally, to complete the setting for this case study, we consider the liquidity-adjusted linear inverse demand function provided in Example \ref{example2.1} with mean and standard deviation of 1, i.e., \[F(\theta,M) = 1 - \left(\frac{1}{\alpha_0} + \sum_{i \in \M}\frac{1}{\alpha_i}\right)^{-1} \theta\]
for societal risk aversion $\alpha_0 = 0.1$ and bank risk aversion $\alpha_i = 0.1$ for every bank $i$.
This consistent choice of risk aversion allows us to isolate and study the system's equilibrium under market maker distress without the confounding effects of heterogeneous risk preferences.

For the purposes of this case study, we compare our extended clearing system with the benchmark model of \cite{amini2016uniqueness} with fixed market liquidity. Specifically, this benchmark system utilizes the linear inverse demand function $\theta \mapsto F(\theta,\vec{1})$ which corresponds to the setting without any market maker distress.\footnote{As the liquidity-adjusted linear inverse demand function only depends on the market liquidity $M$ through the set of market makers $\M$, we arbitrarily took $M = \vec{1}$ to correspond to the no market maker distress setting.}

\begin{figure}[H]
  \centering
  \begin{subfigure}[t]{0.45\linewidth}
    \includegraphics[width=\linewidth]{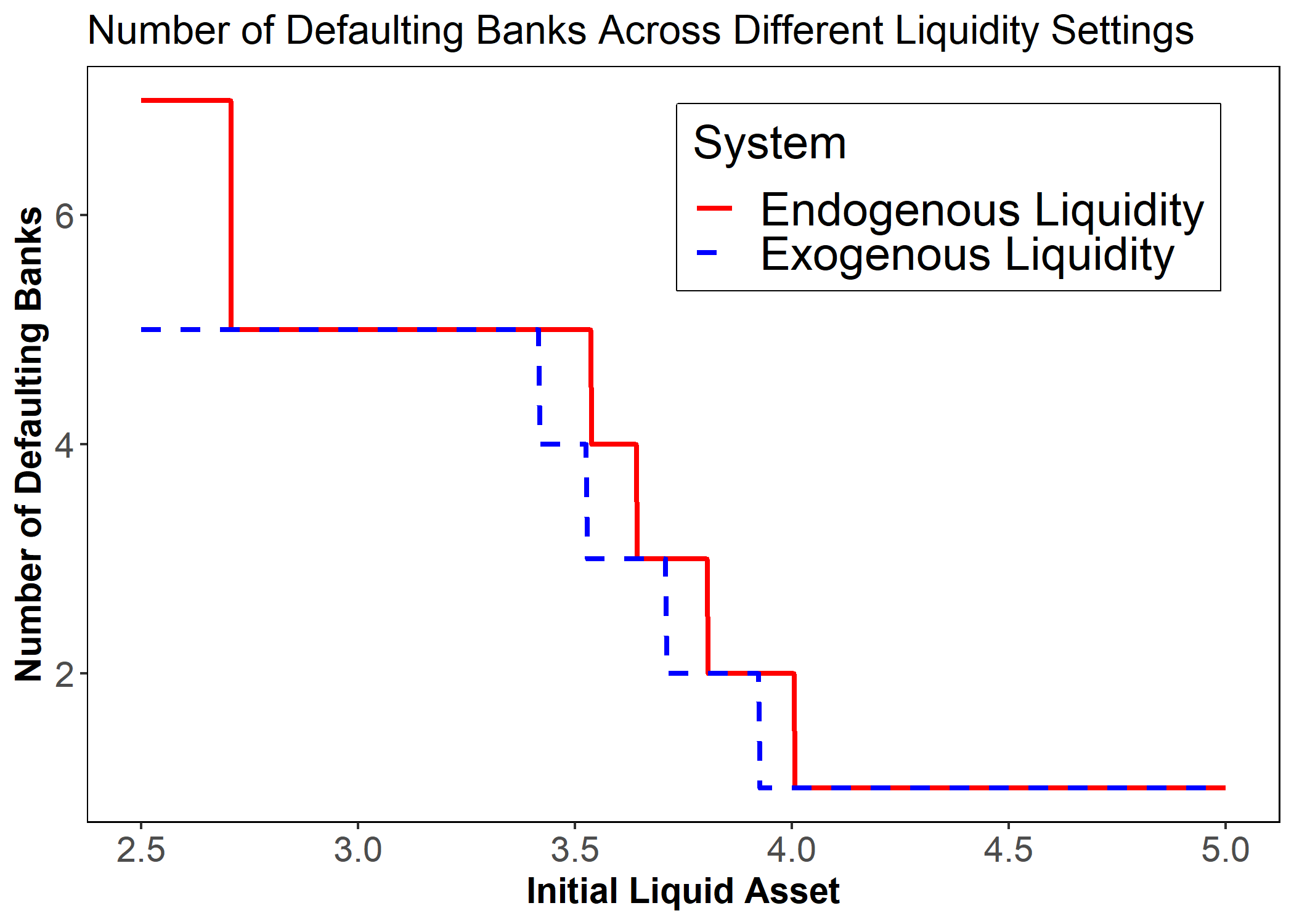}
    \caption{Defaulting Banks}
    \label{subfig:defaulting_banks}
  \end{subfigure}
  \hfill
  \begin{subfigure}[t]{0.45\linewidth}
    \includegraphics[width=\linewidth]{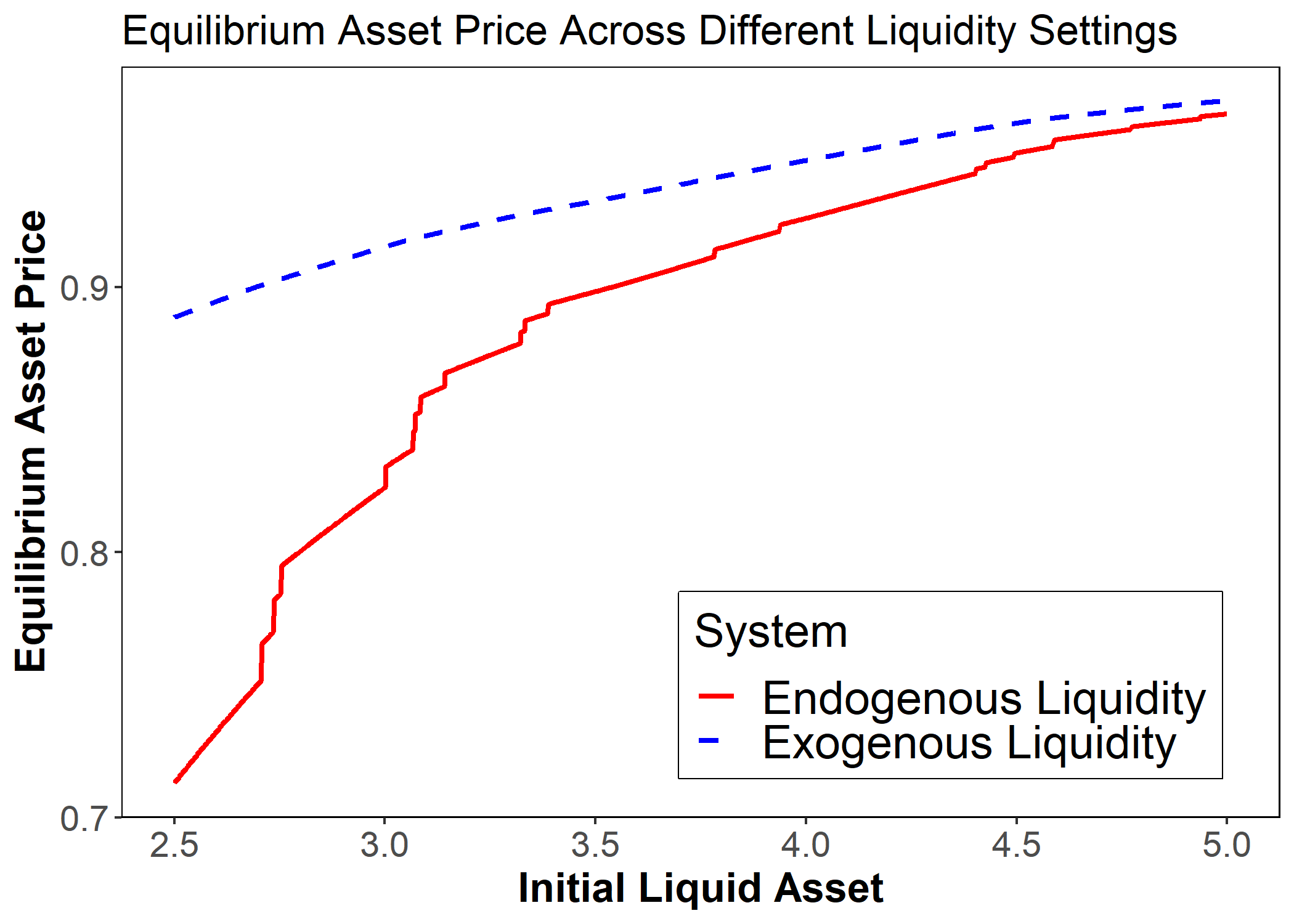}
    \caption{Equilibrium Asset Price}
    \label{subfig:asset_price}
  \end{subfigure}
  \caption{Effects of market maker distress: (a) number of defaulting banks and (b) equilibrium asset price across varying market liquidity.}
  \label{Fig.main}
\end{figure}

Figure \ref{Fig.main} provides a comprehensive view of the effects of market maker distress on the financial system under a single realization of the liability matrix. First, in Figure \ref{subfig:defaulting_banks}, we see that the number of defaulting institutions grows faster with the liquidity-adjustments than the benchmark model. Notably, for mild shocks ($x_i > 4$ for all banks $i$), both clearing systems are resilient and exhibit no defaults. However, as we escalate the financial shocks, the system with endogenous market liquidity shows a more rapid drop in resilience, resulting in a higher number of defaulting banks. Specifically, under severe financial shocks (when \( x_{i} = 2.5 \)), 7 banks default in the endogenous market liquidity system compared to only 5 in the fixed market liquidity system. 

Figure \ref{subfig:asset_price} presents the equilibrium asset prices in response to these initial financial shocks for both endogenous and exogenous market liquidity settings. In our endogenous market liquidity system, there are noticeable price jumps. These events can be attributed to shocks which trigger an additional market maker distress. Recall that, within the endogenous market maker setting, severe financial shocks not only compel banks to sell illiquid assets but also lead to distress among market makers which affects market liquidity. 
In contrast, the exogenous market liquidity setting results in a smooth mapping between shocks and the resulting asset price.

\begin{figure}[H]
  \centering
  \begin{subfigure}[t]{0.45\linewidth}
    \includegraphics[width=\linewidth]{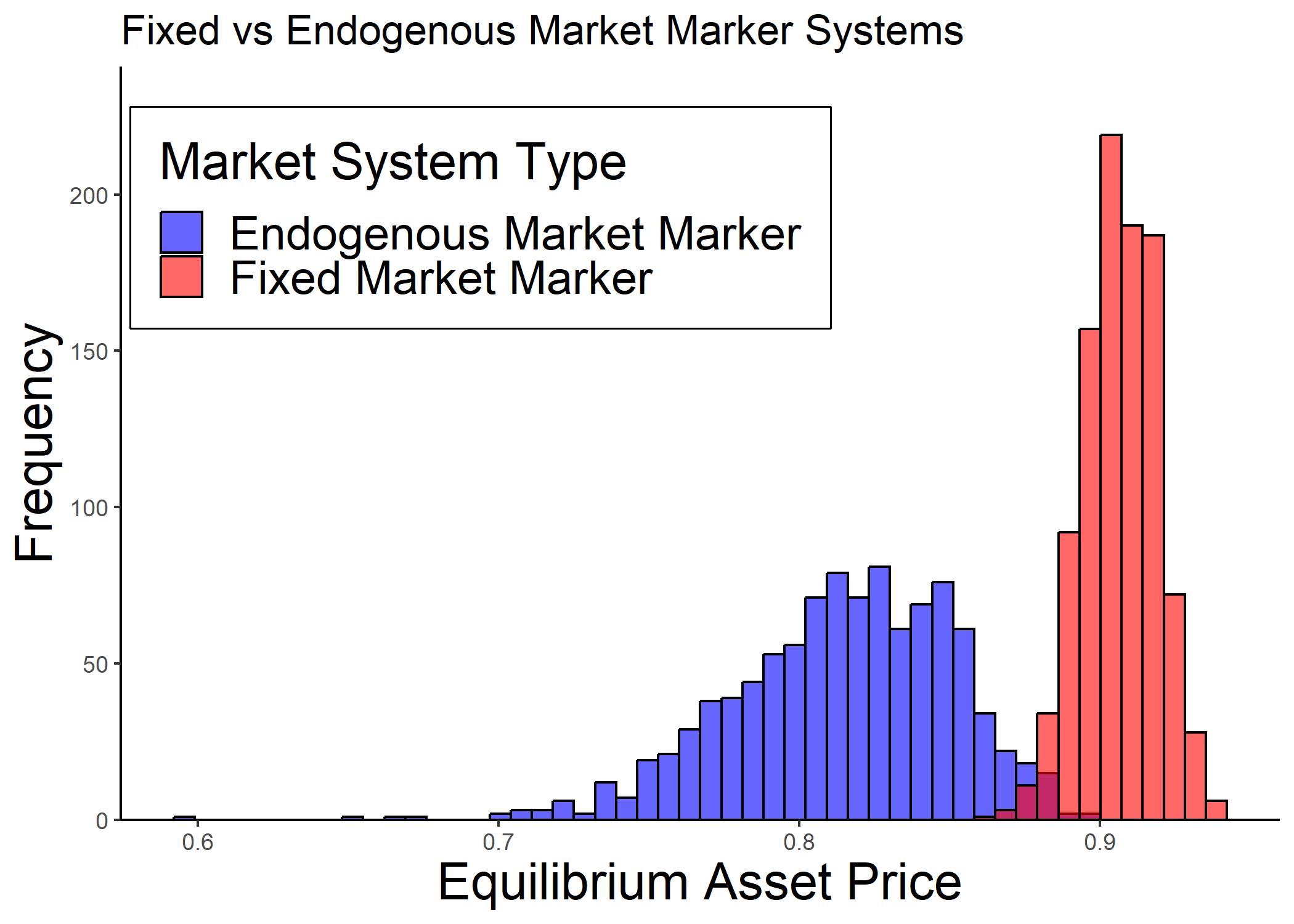}
    \caption{Fixed vs Endogenous}
    \label{subfig:Fixed vs Endogenous}
  \end{subfigure}
  \hfill
  \begin{subfigure}[t]{0.45\linewidth}
    \includegraphics[width=\linewidth]{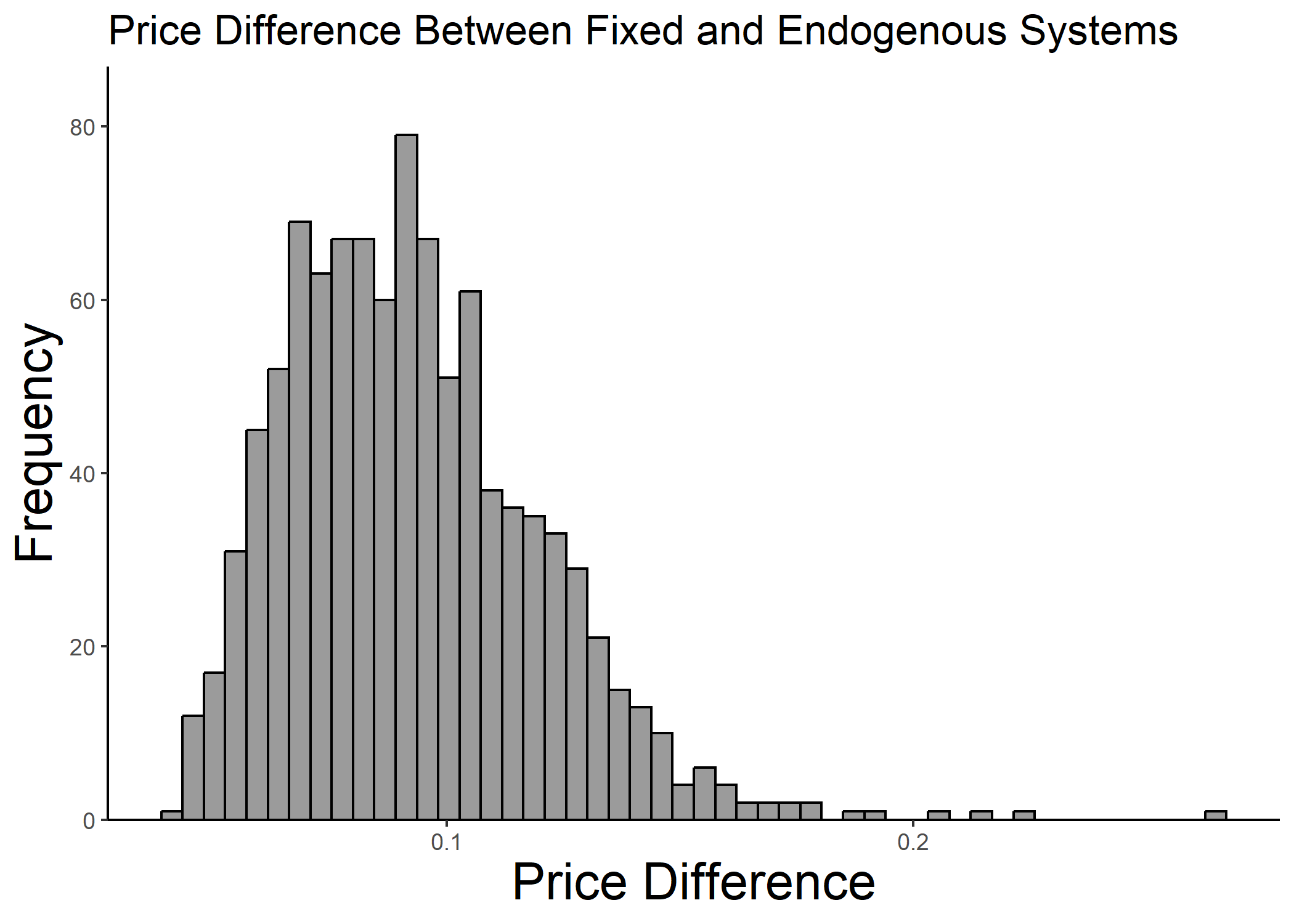}
    \caption{Price difference}
    \label{subfig:Price difference}
  \end{subfigure}
  \caption{Comparison of the equilibrium price under exogenous and endogenous liquidity settings over 1000 random networks.}
  \label{hist}
\end{figure}

To further evaluate the impact of market maker distress, we conduct simulations on 1000 randomly generated financial systems with severe financial shock (\( x = 3 \)). Figure \ref{subfig:Fixed vs Endogenous} portrays the equilibrium asset price distributions for both fixed and endogenous market marker systems. Notably, the exogenous liquidity system shows a price distribution clustered between $0.85$ and $0.95$. In contrast, the distribution for the endogenous liquidity system exhibits extreme tail behavior, with roughly $40\%$ of the cases resulting in equilibrium prices falling below $0.8$. Moreover, the endogenous system captures extreme events where the asset price drops by over $30\%$. Notably, these extreme scenarios are completely absent in the exogenous liquidity model. To highlight the differences in these clearing models, Figure \ref{subfig:Price difference} compares the equilibrium prices across individual simulations. The data reveals that the prices in the endogenous system are uniformly lower than in the exogenous system as expected. Specifically, the average equilibrium price for the endogenous system stands at $0.814$, considerably less than the $0.907$ average for the fixed system.

\begin{remark}
Recall that we considered the liquidity-adjusted linear inverse demand function within this case study. Notably, that function is discontinuous at the points that market makers enter distress. However, though it is tempting to attribute the discontinuities found in the clearing solutions to this property of the inverse demand function, we find that similar results hold for even jointly continuous liquidity-adjusted inverse demand functions. 
For instance, similar results can be found with the inverse demand function $F(\theta,M) = \exp(-\theta/\sum_{i = 1}^n M_i)$ for any $\theta \geq 0$ and $M \in \mathbb{R}^n_+$.
Therefore, we wish to highlight the fundamental nature of the discontinuities which mark points at which the distress of a single institution can lead to disproportionate impacts on the health of the financial system.
\end{remark}

\subsection{Effects of diversification}\label{sec:cs-diversification}
We now wish to consider a $n = 2$ bank system with $m = 2$ illiquid assets. We will assume that, after some shocks, bank 1 does not hold any liquid endowment, and bank 2 holds $1$ unit of liquid assets ($x_{1}=0$ and $x_{2}=1$).  
We assume that these banks share the total assets available on the market (fixed at 2 for simplicity), i.e., $s_{11} + s_{21} = 2$ and $s_{12} + s_{22} = 2$. Furthermore, assuming the initial price of both illiquid assets is 1, both banks begin the system with 2 in illiquid assets, i.e., $s_{11} + s_{12} = 2$ and $s_{21} + s_{22} = 2$.
Finally, we assume that bank 1 owes $L_{12} = 1$ to bank 2 and $L_{10} = 1.85$ to the external system; bank 2 owes $L_{21} = 1$ to bank 1 and $L_{20} = 1$ to the external system. 
Notably, the interbank network was constructed so that, if bank 1 is solvent and makes its payments in full then bank 2 does not need to liquidate any assets and, as such, acts as a market maker. However, if bank 1 is insolvent, bank 2 must liquidate assets to remain solvent.
This network is displayed in Figure \ref{Example}. 

\begin{figure}[!h]
\begin{center}
\includegraphics[width=1\linewidth]{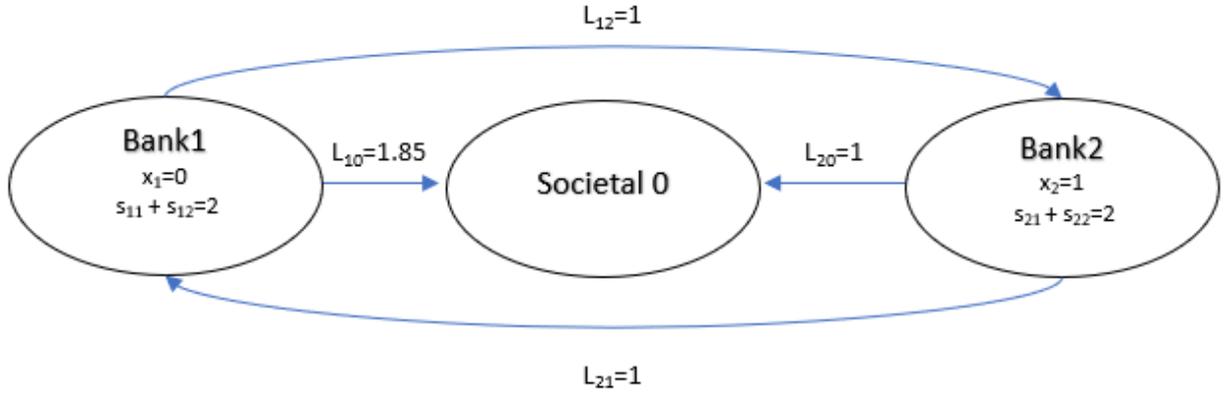}
\caption{Summary of the financial system for Section~\ref{sec:cs-diversification}}
\label{Example}
\end{center}
\end{figure}

By varying the portfolio composition of these two banks, we are able to investigate the impact of portfolio diversification on system risk in much the same as as in Example 5.4 of \cite{feinstein2020capital}. 
Due to the asset constraints described above, we parameterize the holdings by $\lambda \in[0,2]$ and set $s_{11} = s_{22} =\lambda$ and $s_{12} = s_{21} = 2-\lambda$. When $\lambda \in\{0,2\}$, the banks are holding non-overlapping portfolios, corresponding to a fully diverse system. When $\lambda=1$, the portfolios of the banks are identical, corresponding to a fully diversified system. Thus as $\lambda$ increases from 0 to 1 (or decreases from 2 to 1), the system moves from a fully diverse to a fully diversified setting. As such, we can study the effect of the diversification parameter $\lambda$ on the health of the financial system.
As in \cite{feinstein2020capital}, both banks are assumed to follow the proportional liquidation function discussed within Example~\ref{example2.2.3}.

In order to complete our discussion of the setting, we again consider the liquidity-adjusted linear inverse demand function of Example~\ref{example2.1}, i.e., 
\[F(\theta, M)=\mu - \frac{1}{10(1+|\M|)} C \theta \text{ with } \mu=1,~C = \left[\begin{array}{ll}\sigma^2 & \rho \sigma^2 \\ \rho \sigma^2 & \sigma^2\end{array}\right].\]
Implicit to this construction, all risk aversions $\alpha_{i}=0.1$ are fixed at the same level. 

To fully study the impacts of this clearing model, we will vary both the standard deviation $\sigma > 0$ and the correlation $\rho \in [0,1]$.
Note that, if $\sigma > 0$ were kept constant, the cross-price impacts would only worsen as asset correlations $\rho \in [0,1]$ increase. To counteract this effect, so as to have a more precise comparison, we will vary $\sigma$ in tandem with $\rho$ so that $(1+\rho)\sigma^2 = 1$ throughout; this maintains a constant variance of $8$ for the market portfolio $(2,2)$. In this way, increased cross-price impacts are partially offset by lower within-asset price impacts.

\begin{remark}
Due to the symmetry of the assets, we consider only the diversification parameter $\lambda \in [0,1]$ for the remainder of this example.
\end{remark}

\begin{figure}[!h]
\begin{center}
\includegraphics[width=0.6\linewidth]{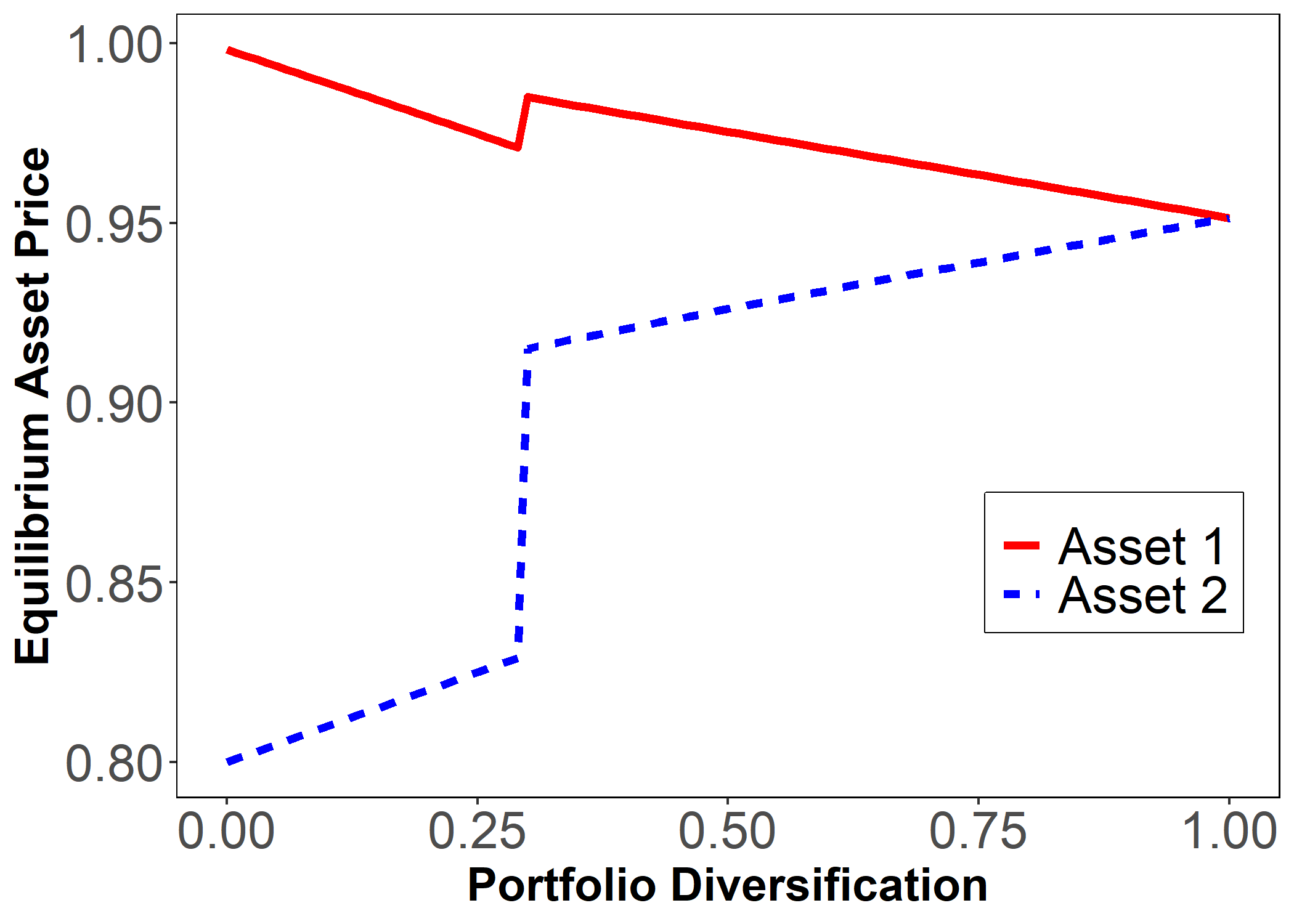}
\caption{Clearing prices with varying diversification with  $\sigma = 1$ and $\rho = 0$}
\label{cross impact 1}
\end{center}
\end{figure}

In Figure \ref{cross impact 1}, we find the influence of diversification on the clearing prices for these two assets under $\sigma = 1$ and $\rho = 0$. For highly diverse portfolios, bank 1 is forced to liquidate a large amount of asset 2, which collapses the price of asset 2 ($q_2 \leq 0.7$). Due to these price impacts, bank 1 is not able to cover its shortfall and defaults on its obligations; as such, bank 2 needs to liquidate assets to remain solvent. That is, the price impacts are self-reinforcing as neither bank is able to serve as a market maker. 
In contrast, for highly diversified portfolios, bank 1 splits its liquidations between both assets causing less price impacts to either. As such, bank 1 will be solvent and both bank 2 and the external system act as marker makers. It is this change in market makers that causes the prices to jump. 
Specifically, when the set of market makers jumps from $\{0\}$ to $\{0,2\}$, the price of asset 1 jumps from 0.97 to 0.985 and the price of asset 2 jumps from 0.83 to 0.915.

\begin{figure}[H]
\begin{center}
\includegraphics[width=0.6\linewidth]{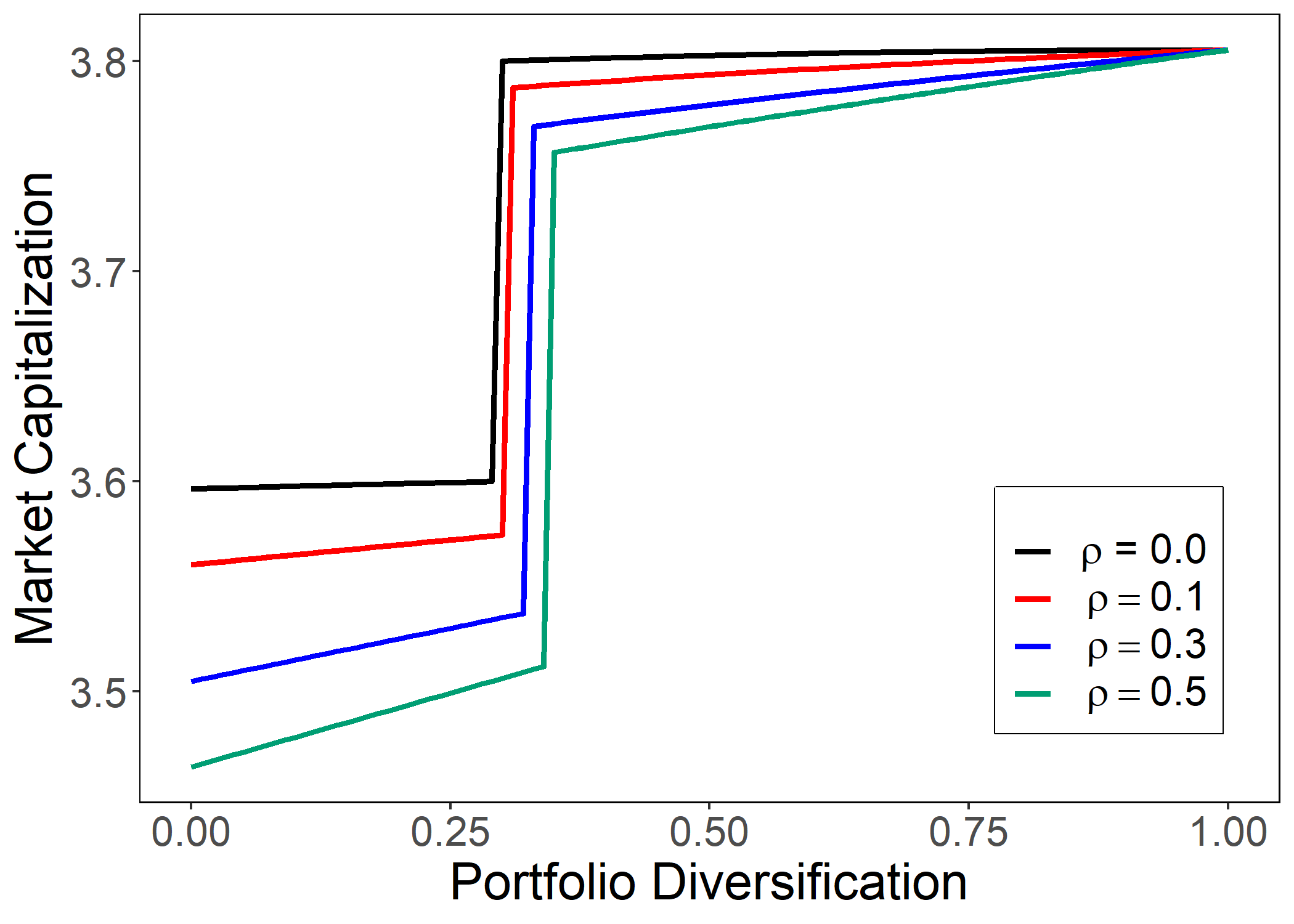}
\caption{Equilibrium market capitalization varying diversification and asset correlations.}
\label{market cap}
\end{center}
\end{figure}

In Figure \ref{market cap}, we consider the impacts of cross-price impacts through varying the correlation coefficient $\rho \in \{0,0.1,0.3,0.5\}$. Following Example \ref{example2.1}, the correlation between two assets influences the inverse demand function. Specifically, this positive correlation implies that the fire sale of one asset depreciates the price of the other. Note that the case of $\rho = 0$ is displayed in greater details within Figure \ref{cross impact 1}. 
Comparing these four correlation settings, we find that the effect of $\rho$ becomes evident on the total market capitalization (i.e., $2q_1 + 2q_2$ due to the market construction). First, the correlation influences the clearing price for a fully diverse portfolio. 
Notably, the cross-price impacts dominate the within-asset price impacts (when keeping total market volatility constant). That is, the market with low asset correlation reaches higher clearing prices and market capitalization than the market with higher correlations. 
However, for the fully diversified portfolio $\lambda = 1$, the clearing prices are equivalent (at 0.95) in all correlations always stay. 
In fact, it can easily be proven that the perfectly diversified system is independent of the asset correlations as observed herein. 
In addition to the larger price impacts, asset correlations influence the realized price jump caused by the change in the set of market markers. For $\rho=0$, the price jump occurs at $\lambda=0.3$, whereas for $\rho=0.5$, the price jump occurs at $\lambda=0.36$.

\bibliographystyle{plain}
\bibliography{ref}

\newpage
\appendix
\section{Fictitious default algorithm}\label{sec:clearing-fda}
In this section, we introduce a modified version of a fictitious default algorithm from \cite{eisenberg2001systemic} to calculate the greatest clearing solution $\left(p^{\uparrow},q^{\uparrow}, M^{\uparrow}\right)$. Briefly, the fictitious default algorithm works as follows. First, we assume that all obligations are paid in full $(p=\bar{p})$ and the prices of the illiquid assets are at the maximum level $(q=\bar{q})$ based on this assumption. We calculate the set of insolvent banks in the first iteration. Under this set of insolvent banks, we calculate the equilibrium prices of illiquid assets, the related payment vector, and market liquidity under the clearing mechanism in Section 3.1, i.e., $\left(p^*, q^*, M^*\right)=\Phi\left(p^*, q^*, M^*\right)$. With these candidate payments, prices, and market liquidity, we update the set of insolvent banks. If the set of insolvent banks does not expand, then the algorithm terminates. If the set of insolvent banks changes, we iterate the price of illiquid assets, the related payment vector, and market liquidity accordingly. This process is repeated until the set of insolvent banks no longer changes.

\begin{algorithm} \label{algo}
Initialize $k=0$. $p^{k}=\bar{p}$, $q^{k}=\bar{q}$, and $M^{k}=\bar{M}$.
\begin{enumerate}[label=(\roman*)]
\item Increment $k=k+1$;
\item Denote the set of insolvent banks: $D^k:=\{i \in\{1,2, \ldots, n\} \mid x_i+\sum_{l=1}^m s_{i l} q_l^{k-1}+\sum_{j=1}^n a_{j i} p_j^{k-1}-\bar{p}_i < 0 \}$;
\item If $k \geq 2$ and $D^k=D^{k-1}$ then terminate;

\item Define the matrix $\Lambda^k \in\{0,1\}^{n \times n}$ so that
$$
\Lambda^k_{i j}= \begin{cases}1 & \text { if } i=j \in D^k \\ 0 & \text { else }\end{cases};
$$
\item $p^k=\hat{p}, q^k=\hat{q}$, and $M^k=\hat{M}$ \text { are the maximal solution of the following fixed point problem }\\
$\hat{p}=(I-\Lambda^k) \bar{p}+\Lambda^k\left(x+S \hat{q}+ A^{\top} \hat{p}\right)$\\
$\hat{q}=F(\sum_{i \in D^k} s_i+\sum_{i \notin D^k}\left[s_i \wedge \gamma_i(\hat{p}, \hat{q})\right], \hat{M})$\\
$\hat{M}=\left(x+A^{\top} \hat{p}-\bar{p}\right)^{+}$

\item Return to (i);
\end{enumerate}

\end{algorithm}

\begin{proposition}
The maximal clearing solution can be found by the algorithm in at most $n$ iterations.
\end{proposition}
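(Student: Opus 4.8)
The plan is to adapt the Eisenberg--Noe fictitious default analysis \cite{eisenberg2001systemic} to the joint system, carrying the auxiliary variables $q$ and $M$ alongside the payments. The argument has three parts: (a) each pass of Algorithm~\ref{algo} is well posed and returns a state no larger than the state fed into it; (b) hence the insolvent sets $D^k$ are nested and strictly grow until the algorithm stops, which bounds the number of passes by $n$; and (c) the terminal state is exactly the greatest clearing solution $(p^\uparrow,q^\uparrow,M^\uparrow)$.

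For (a), fix a default set $D$ and let $G^D$ be the reduced update of Step~(v), i.e.\ $G^D(\hat p,\hat q,\hat M)$ is the right-hand side of its three equations. As in the proof of Theorem~\ref{Theorem 3.2.1}, $G^D$ is order preserving: the payment coordinate is monotone in $(\hat p,\hat q)$ since $\Lambda^k$ is nonnegative and $A\ge 0$; the liquidity coordinate is monotone in $\hat p$; and the price coordinate is monotone because raising $(\hat p,\hat q)$ only lowers each liquidation $\gamma_i$ (minimal liquidation together with the monotonicity of $\gamma_i$), while $F$ is non-increasing in liquidations and non-decreasing in $M$ by Assumption~\ref{assumption 2.3}. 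The key observation is that $G^{D^k}$ maps the order interval $[0,\,(p^{k-1},q^{k-1},M^{k-1})]$ into itself: the only coordinate requiring care is the payment of a bank $i\in D^k\setminus D^{k-1}$, which under $G^{D^k}$ equals its asset value $x_i+(S\hat q)_i+(A^\top\hat p)_i\le x_i+(Sq^{k-1})_i+(A^\top p^{k-1})_i<\bar p_i=p^{k-1}_i$, the strict inequality being precisely the statement that $i$ is insolvent at the previous iterate, i.e.\ that $i\in D^k$. Applying Tarski's theorem on this interval yields the iterate $(p^k,q^k,M^k)$ as its greatest fixed point, which is then automatically $\le(p^{k-1},q^{k-1},M^{k-1})$; starting from $(p^0,q^0,M^0)=(\bar p,\bar q,\bar M)$ we obtain a decreasing sequence of iterates.

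For (b), since the iterates decrease, the insolvent sets of Step~(ii) satisfy $D^1\subseteq D^2\subseteq\cdots$, and any pass that is not halted by the test in Step~(iii) enlarges the insolvent set by at least one bank; as there are $n$ banks this forces termination within $n$ passes. For (c), let $(p^*,q^*,M^*)$ be the terminal iterate and $D^*$ the terminal insolvent set, which by the stopping rule is also the insolvent set recomputed from $(p^*,q^*,M^*)$. Then $(p^*,q^*,M^*)=\Phi(p^*,q^*,M^*)$ for $\Phi$ from \eqref{phi_func}: the payment and liquidity coordinates agree by construction, and the price coordinate agrees because each $i\in D^*$ is insolvent and hence, by the minimal liquidation condition of Assumption~\ref{Assumption 2.1}, fully liquidates its portfolio on every asset whose price is positive, so that $\sum_{i\in D^*}s_i+\sum_{i\notin D^*}\bigl[s_i\wedge\gamma_i(p^*,q^*)\bigr]=\sum_{i=1}^n\gamma_i(p^*,q^*)$. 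To see it is the \emph{greatest} clearing solution, one verifies by induction that every iterate dominates $(p^\uparrow,q^\uparrow,M^\uparrow)$: if iterate $k-1$ does, then $D^k$ (the set insolvent at iterate $k-1$) is contained in the insolvent set $D^\uparrow$ of the clearing solution, and $(p^\uparrow,q^\uparrow,M^\uparrow)$ is itself a fixed point of $G^{D^k}$ that lies in the relevant order interval (again because defaulted banks fully liquidate at the clearing solution), hence is $\le$ iterate $k$. The terminal iterate is therefore a clearing solution dominating $(p^\uparrow,q^\uparrow,M^\uparrow)$, so it equals it.

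The step I expect to be the main obstacle is part (a): the intuitive statement that enlarging the insolvent set only pushes the reduced clearing state downward cannot be obtained from a naive pointwise comparison of $G^D$ and $G^{D'}$ on the full lattice, and the workable route is to run Step~(v) on the order interval below the previous iterate, exploiting that the newly-insolvent banks are insolvent precisely there. A secondary technicality is the degenerate case of a vanishing asset price at a solution, where the minimal liquidation condition no longer pins down how much of that asset a defaulted bank sells, so that the reduced-system solution need not be literally a fixed point of $\Phi$ in that coordinate; this is dispatched by noting that the inverse demand functions of interest remain strictly positive (for instance the exponential form $F(\theta,M)=\exp(-\theta/\sum_{i=1}^n M_i)$) or, in general, by a limiting argument with a price floor.
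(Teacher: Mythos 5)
Your proof is correct and follows the same fictitious-default route as the paper, whose own proof is only a three-line sketch deferring to Eisenberg--Noe (and even writes the set inclusion backwards); your version supplies the actual substance --- the self-mapping of the order interval below the previous iterate, the Tarski step at each pass, the verification that the terminal iterate is a genuine fixed point of $\Phi$, and the domination argument --- and correctly flags the zero-price degeneracy that the paper ignores. One small imprecision in part (c): $(p^\uparrow,q^\uparrow,M^\uparrow)$ is in general only a \emph{pre}-fixed point of $G^{D^k}$, since banks in $D^\uparrow\setminus D^k$ are assigned payment $\bar p_i>p^\uparrow_i$, but Tarski's characterization of the greatest fixed point as the supremum of all pre-fixed points on the order interval yields exactly the domination you need, so the argument stands.
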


\begin{proof}
The convergence of this algorithm to the greatest clearing solution $\left(p^{\uparrow},q^{\uparrow}, M^{\uparrow}\right)$ follows from the logic of the fictitious default algorithm in \cite{eisenberg2001systemic}. The set of insolvent banks will expand for each iteration, i.e., $ D^{k+1} \subseteq D^{k}$. For $k = 1$, $(p^{1}, q^{1}) \leq (p^{0}, q^{0}) = (\bar{p}, \bar{q}) $. For $k \geq 1$, We assume that $(p^{k}, q^{k}) \leq (p^{k-1}, q^{k-1})$. Solving the fixed point problem in a fictitious default algorithm, we have $(p^{k+1}, q^{k+1}) \leq (p^{k}, q^{k})$. Therefore, the algorithm converges in at most $n$ iterations as there are only $n$ financial institutions.
\end{proof}

\section{European Banking Authority case study}\label{sec:EBA}

We wish to augment the small numerical experiments conducted within the main body of the text with a large numerical case study calibrated to financial data. Herein, we seek to see if the results found in the main text also hold for a more realistic financial network. Following the calibration utilized in \cite{feinstein2018sensitivity}, we utilize detailed bank-by-bank data for $n = 87$ banks across 25 EEA/EU countries from the European Banking Authority's EU-wide Transparency Exercise. Specifically, we employ the 2011 EBA dataset, which has been used in previous studies on systemic risk (see, e.g., \cite{gandy2017bayesian,chen2016optimization}).

For the purposes of this case study we consider the case in which only a single illiquid asset $m = 1$ exists. As only the total external assets, and not their composition, are known, we simulate this system by varying the proportion of those holdings that are liquid and illiquid at all institutions. For the purposes herein, we vary that ratio from 90\% to 100\% liquid (equivalently 10\% to 0\% illiquid). Using the liquidity-adjusted linear inverse demand function (Example~\ref{example2.1}), we set all risk aversions $\alpha_i = 5 \times 10^{-7}$ and $\alpha_0 = 5 \times 10^{-7}$. Though we only consider this singular risk aversion, all results appear comparable with other choices of risk aversions as well.

\begin{figure}[ht]
\begin{center}
\includegraphics[width=0.5\linewidth]{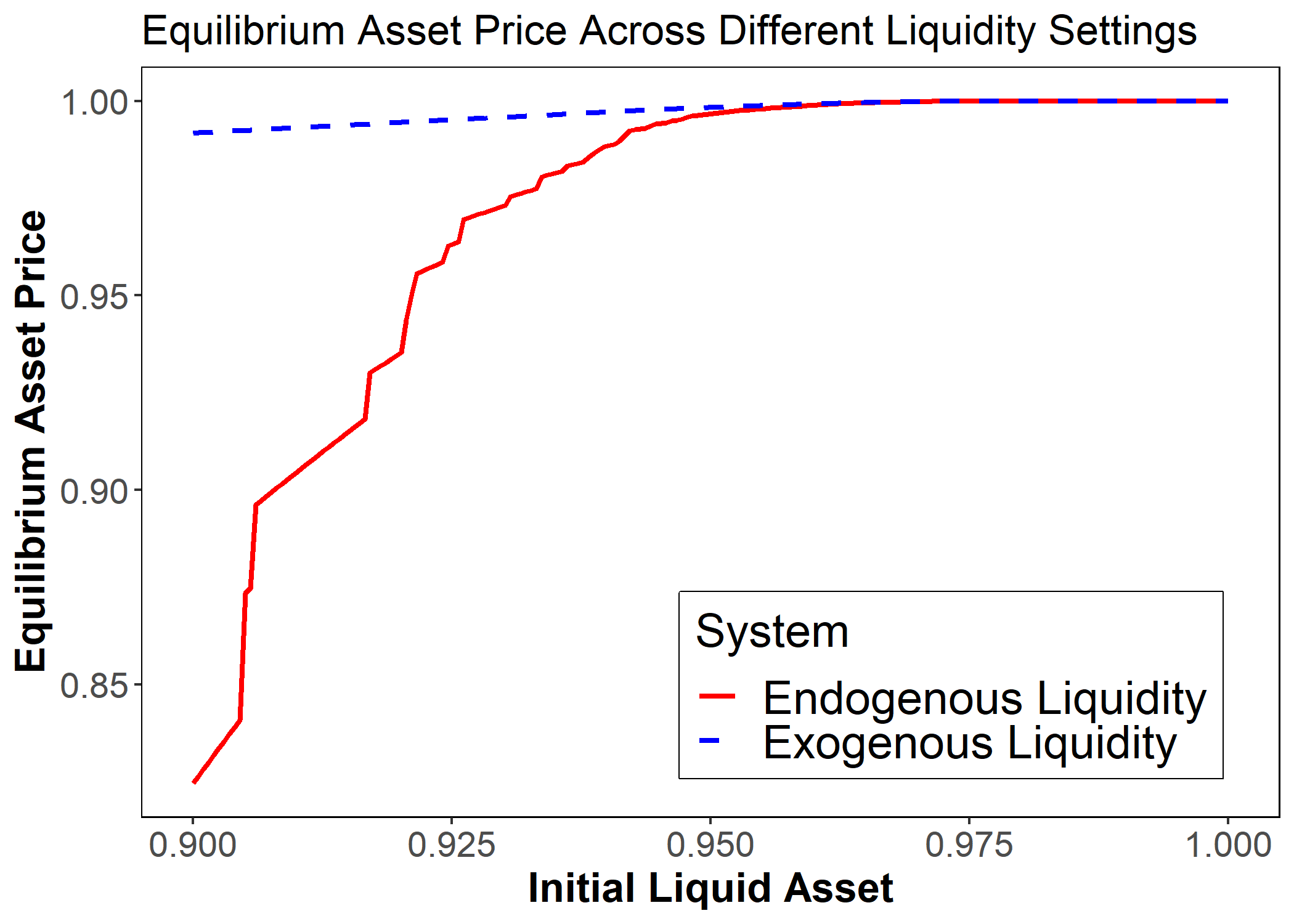}
\caption{Effects of initial asset holdings on equilibrium asset prices on a system calibrated to European banking data.}
\label{EBA}
\end{center}
\end{figure}

Figure~\ref{EBA} compares our proposed liquidity-adjusted system to the fixed liquidity setting used in prior works (see, e.g., \cite{amini2016uniqueness,greenwood2015vulnerable}). In this European banking system, we observe noticeable price jumps in our endogenous market liquidity model. These jumps can be attributed to shocks triggering additional market maker distress, demonstrating that the magnitude of impact indicated by our stylized simulations of Section~\ref{sec:cs} also holds for realistic networks.

While our model provides valuable insights into the interplay between market liquidity and systemic risk, it is essential to acknowledge some potential limitations in this case study. First, for simplicity, we assume that market makers have homogeneous risk aversions and distribution of asset holdings. However, in practice, market makers may have heterogeneous characteristics, which could lead to more complex market behaviors. Second, our model considers a single period of trading and does not account for the potential long-term effects of market maker distress on market liquidity and asset prices. These limitations present several avenues for future research, such as exploring the impact of incomplete or asymmetric information, investigating the effects of heterogeneous market maker characteristics, and extending the model to a multi-period setting to capture long-term market dynamics.

\end{document}